\numberwithin{equation}{section}
\newtheorem{thm}{Theorem}[section]
\newtheorem{cor}[thm]{Corollary}
\newtheorem{defi}{Definition}[section]
\newtheorem{assume}{Assumption}
\newtheorem{prop}{Propositioin}[section]
\newtheorem{lemma}{Lemma}[section]
\begin{document}

\author{Jihun Han\thanks{jihunhan@cims.nyu.edu}\; and Hyungbin Park\thanks{hyungbin@cims.nyu.edu, hyungbin2015@gmail.com}\\ \\
Courant Institute of Mathematical Sciences,\\ New York University, New York, USA \\ \\
{\normalsize First version: Nov 09, 2014} \\ {\normalsize Final version: Mar 03, 2015}}
\title{The Intrinsic Bounds on the Risk Premium of Markovian Pricing Kernels}

\date{}

\maketitle

\begin{abstract}
The risk premium is one of main concepts in mathematical finance.
It is a measure of the trade-offs investors make between return and risk and is defined by
the excess return relative to the risk-free interest rate that is earned from an asset per one unit 
of risk.
The purpose of this article is to determine upper and lower bounds on the risk premium of an 
asset based on the market prices of options.
One of the key assumptions to achieve this goal is that the market is Markovian.  
Under this assumption, we can transform the problem of finding the bounds into a second-order 
differential equation. We then obtain upper and lower bounds on the risk premium by analyzing 
the differential equation.
\end{abstract}

\section{Introduction}
\label{sec:intro}
The {\em risk premium} or {\em market price of risk} is one of main concepts in mathematical 
finance.
The risk premium is a measure of the trade-offs investors make between return and risk and is 
defined by
the excess return relative to the risk-free interest rate earned from an asset per one unit of risk.
The risk premium determines the relation between an objective measure and a risk-neutral 
measure.
An objective measure describes the actual stochastic dynamics of markets, and a risk-neutral 
measure determines the prices of options.

Recently, many authors have suggested that the risk premium (or, equivalently, objective 
measure) can be determined from a risk-neutral measure.
Ross \cite{Ross13} demonstrated that the risk premium can be uniquely determined by a risk-neutral measure. His model assumes that there is a finite-state Markov process $X_{t}$ that 
drives the economy in discrete time $t\in\mathbb{N}.$
Many authors have extended his model to a continuous-time setting using a Markov diffusion 
process
$X_{t}$ with state space $\mathbb{R}$; see, e.g., 
\cite{Borovicka14},\cite{Carr12},\cite{Dubynskiy13},\cite{Goodman14},\cite{Park14b},\cite{Qin14b} and 
\cite{Walden13}.
Unfortunately, in the continuous-time model, the risk premium is not uniquely determined from a 
risk-neutral measure \cite{Goodman14}, \cite{Park14b}.

To determine the risk premium uniquely, all of the aforementioned authors assumed that some 
information about the objective measure was known or restricted the process $X_t$ to some 
class.   
Borovicka, Hansen and Scheinkman \cite{Borovicka14} made the assumption that the process 
$X_t$ is {\em stochastically stable} under the objective measure.
In \cite{Carr12}, Carr and Yu assumed that the process $X_{t}$ is a {\em bounded} process.
Dubynskiy and Goldstein \cite{Dubynskiy13} explored Markov diffusion models with {\em 
reflecting boundary} conditions.
In \cite{Park14b}, Park assumed that $X_t$ is non-attracted to the left (or right) boundary under 
the objective measure.
Qin and Linetsky \cite{Qin14b} and Walden \cite{Walden13} assumed that the process $X_{t}$ is 
{\em recurrent} under the objective measure.
Without these assumptions, one cannot determine the risk premium uniquely.

The purpose of this article is to investigate the bounds of the risk premium.
As mentioned above, without further assumptions, the risk premium is not uniquely determined, 
but one can determine upper and lower bounds on the risk premium.
To determine these bounds, we need to consider how the risk premium of an asset is 
determined in a financial market.

A key assumption of this article is that the reciprocal of the pricing kernel
is expressed in the form
$e^{\beta t}\,\phi(X_{t})$
for some positive constant $\beta$ and positive function $\phi(\cdot).$ 
For example, in the {\em consumption-based capital asset model} \cite{Campbell99}, \cite{Karatzas98}, the pricing kernel is expressed in the above form.
We will see that in this case
the risk premium $\theta_t$ is given by
\begin{equation} \label{eqn:theta}
\theta_t=(\sigma\phi'\phi^{-1})(X_t)\;,
\end{equation}
where $\sigma(X_t)$ is the volatility of $X_t.$

The problem of determining the bounds of the risk premium can be transformed into a second-order differential equation. We will demonstrate that $\phi(\cdot)$ satisfies the following 
differential equation:
$$\mathcal{L}\phi(x):=\frac{1}{2}\sigma^2(x){\phi ''(x)}+k(x)\phi '(x)
-r(x)\phi (x) =-\beta \,\phi (x)$$
for some unknown positive number $\beta.$
Thus, we can determine the bounds of the risk premium by investigating the bounds of $(\sigma\phi'\phi^{-1})(\cdot)$ for a 
positive solution $\phi(\cdot).$
It will be demonstrated that two special solutions of $\mathcal{L}h=0$ play an important role for 
the bounds of the risk premium $\theta_t.$

The following provides an overview of this article.
In Section \ref{sec:Markovian_pricing}, we state the notion of Markovian pricing kernels.
In Section \ref{sec:risk_premium}, we investigate the risk premium of an asset
and see how the problem of determining the bounds of the risk premium is transformed into a 
second-order differential equation.
In Section \ref{sec:intrinsic_bounds}, we find upper and lower bounds on the risk premium of an asset, which is the main result of this article.
In Section \ref{sec:appli}, we see how this result can be applied to determine the range of return of an asset.
Finally, Section \ref{sec:conclusion} summarizes this article.

\section{Markovian pricing kernels}
\label{sec:Markovian_pricing}

A financial market is defined as a probability space
$(\Omega,\mathcal{F},\mathbb{P})$ having a Brownian motion $B_{t}$
with the filtration $\mathcal{F}=(\mathcal{F}_{t})_{t=0}^{\infty}$ generated by $B_{t}$.  
All the processes
in this article are assumed to be adapted to the filtration $\mathcal{F}$. 
$\mathbb{P}$ is the objective measure of this market.

\begin{assume}
In the financial market, there are two assets. One is
a {\em money market account} $e^{\int_0^t\,r_s\,ds}$ 
with an {\em interest rate} process $r_{t}$
and the other is a risky asset $S_{t}$ satisfying 
$$dS_t=\mu_tS_t\,dt+v_tS_t\,dB_t\;.$$
\end{assume}

\noindent Throughout this article, the stochastic discount factor is the money market account.

Let $\mathbb{Q}$ be a risk-neutral measure in the market $(\Omega,\mathcal{F},\mathbb{P})$
such that 
$S_t\,e^{-\int_0^tr_s\,ds}$ is a local martingale under $\mathbb{Q}.$
Put the Radon-Nikodym derivative 
\begin{equation*} 
\Sigma_{t}=\left.\frac{d \mathbb{Q}}{d \mathbb{P}}
\right|_{\mathcal{F}_{t}} \; , 
\end{equation*}
which is known to be a martingale process 
on $(\Omega,\mathcal{F},\mathbb{P}).$ 
We can write in the SDE form 
\begin{equation*} 
d\Sigma_{t}=-\theta_{t}\Sigma_{t}\, dB_{t}
\end{equation*}
where
\begin{equation}\label{eqn:rho}
\theta_{t}:=\frac{\mu_t-r_t}{v_t}
\end{equation}
is the {\em risk-premium} or {\em market price of risk.}
 It is well-known that $W_{t}$ defined by
\begin{equation}\label{eqn:Girsanov}
dW_{t}=\theta_{t}dt+dB_{t}
\end{equation}
 is a Brownian motion under $\mathbb{Q}.$
We define the reciprocal of the {\em pricing kernel} by $L_{t}=e^{\int_0^tr_s\,ds}/\Sigma_{t}.$  Using the Ito formula, 
\begin{equation} \label{eqn:RN_SDE}
\begin{aligned}
dL_t&=(r_{t}+\theta_{t}^2)\,L_t\,dt +\theta_{t}L_t\, dB_{t}\\
&=r_{t}L_t\,dt + \theta_{t}L_t\, dW_{t} 
\end{aligned}
\end{equation}
is obtained.

\begin{assume} \label{assume:Markovian}
Assume that (the reciprocal of) the pricing kernel $L_t$ is Markovian in the sense that 
there are a positive function $\phi\in C^{2}(\mathbb{R}),$ a positive number $\beta$ 
and a state variable $X_t$ 
such 
that
\begin{equation} \label{eqn:Markovian}
L_{t}=e^{\beta t}\,\phi(X_{t})\,\phi^{-1}(X_{0})\; .
\end{equation}
In this case, we say $(\beta,\phi)$ is a
{\em principal pair} of $X_{t}.$
\end{assume}

\noindent We imposed a special structure on the pricing kernel. This specific form is commonly assumed in the recovery literature as in \cite{Borovicka14},\cite{Carr12},\cite{Dubynskiy13},\cite{Goodman14},\cite{Park14b},\cite{Qin14b},\cite{Ross13} and \cite{Walden13}.
In general, $L_t$ can be expressed as 
$$L_{t}=e^{\beta t}\,\phi(X_{t})\,\phi^{-1}(X_{0})\,M_t$$
where $M_t$ is a $\mathbb{Q}$-martingale. Refer to \cite{Hansen09} for this general expression.
Assumption \ref{assume:Markovian} has an implication that the martingale term $M_t$ is equal to $1.$

We now shift our attention to the assumption that $\beta>0.$ 
In lots of literature on asset pricing theory, $\beta$ is the discount rate of the representative agent, which is typically a positive number.
For example, in the {\em consumption-based capital asset model} \cite{Campbell99}, \cite{Karatzas98},
(the reciprocal of) the pricing kernel is expressed by
$$e^{\beta t}\frac{U'(c_0)}{U'(c_t)}$$
where $U$ is the utility of the representative agent, $c_t$ is the aggregate consumption process and $\beta$ is the discount rate of the agent.

\begin{assume} \label{assume:X}
The state variable $X_{t}$ is a
time-homogeneous Markov diffusion process satisfying the following SDE.
$$dX_{t}=k(X_{t})\,dt+\sigma(X_{t})\,dW_{t}\,,\;X_{0}=\xi\;.$$
$k(\cdot)$ and $\sigma(\cdot)$ are assumed to be known ex ante.
The process $X_t$ takes values in some interval $I$ with endpoints $c$ and $d,\,-\infty\leq c<d\leq\infty.$
It is assumed that $b(\cdot)$ and $\sigma(\cdot)$ are continuous on $I$ and continuously differentiable on $(c,d)$
and that $\sigma(x)>0$ for $x\in(c,d).$
\end{assume}

\begin{assume} \label{assume:interest_rate}
The short interest rate $r_t$ is determined by $X_{t}.$ More precisely, 
there is a continuous positive function
$r(\cdot)$ such that $r_{t}=r(X_{t}).$
\end{assume}
\noindent Under these assumptions, the next section demonstrates how to transform the problem of
determining the bounds of the risk premium into a second-order differential equation.
We will also describe the properties of positive solutions of the differential equation.

\section{Risk premium}
\label{sec:risk_premium}
The purpose of this article is to determine upper and lower bounds on the risk premium 
$\theta_t.$ 
First, we investigate how the risk premium $\theta_t$ is determined with the Markovian pricing kernel.
Applying the Ito formula to \eqref{eqn:Markovian}, we have
$$dL_{t}=\left(\beta+\frac{1}{2}(\sigma^{2}\phi''\phi^{-1})(X_{t})+(k\phi'\phi^{-1})(X_{t})\right)L_{t}
\,dt+(\sigma\phi'\phi^{-1})(X_{t})\,L_{t}\,dW_{t}$$
and by \eqref{eqn:RN_SDE}, we know
$dL_{t}=r(X_{t})\,L_{t}\,dt + \theta_{t}L_{t}\, dW_{t}.
$
By comparing these two equations, we obtain
$$
\frac{1}{2}\sigma^2(x){\phi ''(x)}+k(x)\phi '(x)
-r(x)\phi (x) =-\beta \,\phi (x)
$$
and
\begin{equation}\label{eqn:Markovian_theta}
\theta_{t}=(\sigma\phi'\phi^{-1})(X_{t})\;.
\end{equation}
Define a infinitesimal operator $\mathcal{L}$ by
$$\mathcal{L}\phi(x)=\frac{1}{2}\sigma^2(x){\phi ''(x)}+k(x)\phi '(x)
-r(x)\phi (x)\;.$$

\begin{thm}
Under Assumption 1-\ref{assume:interest_rate}, let $(\beta,\phi)$ be
a principal pair of $X_{t}.$ Then, $(\beta,\phi)$
satisfies 
$\mathcal{L}\phi=-\beta\phi\;.$
\end{thm}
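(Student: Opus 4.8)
The plan is to derive the claimed identity by applying the It\^o formula to the Markovian representation \eqref{eqn:Markovian} of $L_t$ and then matching the result against the dynamics \eqref{eqn:RN_SDE} that $L_t$ necessarily obeys. Since \eqref{eqn:RN_SDE} was established purely from the definition $L_t=e^{\int_0^t r_s\,ds}/\Sigma_t$ together with Girsanov's theorem \eqref{eqn:Girsanov}, it holds for \emph{any} reciprocal pricing kernel; the content of the theorem is that, once we impose the special form \eqref{eqn:Markovian}, the two descriptions of $dL_t$ can be reconciled only if $\phi$ solves $\mathcal{L}\phi=-\beta\phi$.

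First I would treat $L_t=e^{\beta t}\phi(X_t)\phi^{-1}(X_0)$ as $f(t,X_t)$ with $f(t,x)=e^{\beta t}\phi(x)\phi^{-1}(X_0)$ and apply the It\^o formula using the SDE $dX_t=k(X_t)\,dt+\sigma(X_t)\,dW_t$ from Assumption \ref{assume:X}. Collecting the $dt$ and $dW_t$ terms yields
\begin{equation*}
dL_{t}=\left(\beta+\tfrac{1}{2}(\sigma^{2}\phi''\phi^{-1})(X_{t})+(k\phi'\phi^{-1})(X_{t})\right)L_{t}\,dt+(\sigma\phi'\phi^{-1})(X_{t})\,L_{t}\,dW_{t}\;,
\end{equation*}
which is exactly the expression displayed before the theorem. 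This is routine It\^o calculus; the only input required is $\phi\in C^2$ (Assumption \ref{assume:Markovian}), so the formula applies.

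Next I would compare this with \eqref{eqn:RN_SDE} written in terms of the $\mathbb{Q}$-Brownian motion $W_t$, namely $dL_t=r_tL_t\,dt+\theta_t L_t\,dW_t$, and substitute $r_t=r(X_t)$ from Assumption \ref{assume:interest_rate}. Both are continuous semimartingale decompositions of the same process $L_t$; by the uniqueness of the decomposition into a finite-variation part and a local-martingale part, the two drift coefficients must agree and the two diffusion coefficients must agree for almost every $t$. Matching the diffusion terms reproduces the risk-premium formula \eqref{eqn:Markovian_theta}, while matching the drift terms gives $\beta+\tfrac{1}{2}\sigma^{2}\phi''\phi^{-1}+k\phi'\phi^{-1}=r$ evaluated along $X_t$, which rearranges (after multiplying by $\phi$) to $\mathcal{L}\phi(X_t)=-\beta\phi(X_t)$.

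The step I expect to require the most care is upgrading these path-wise identities to the functional identity $\mathcal{L}\phi(x)=-\beta\phi(x)$ for all $x\in(c,d)$. The coefficient matching directly delivers only an equality at the points actually visited by $X_t$. To close this gap I would use that $\sigma(x)>0$ on $(c,d)$ (Assumption \ref{assume:X}) makes $X_t$ a non-degenerate diffusion whose range contains an open interval almost surely, and that $k,\sigma,r,\phi,\phi',\phi''$ are continuous there; continuity then propagates the identity from the dense set of visited states to all of $(c,d)$. This is the genuine, if mild, subtlety behind an otherwise mechanical derivation.
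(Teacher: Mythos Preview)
Your proposal is correct and follows essentially the same approach as the paper: apply It\^o's formula to \eqref{eqn:Markovian} and compare drift and diffusion coefficients with \eqref{eqn:RN_SDE}. The paper treats this as a routine computation and does not even isolate the continuity argument you flag in your final paragraph, so if anything your treatment is more careful than the original.
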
 \noindent
\noindent We also have the following theorem by \eqref{eqn:Girsanov} and \eqref{eqn:rho}.
\begin{thm} \label{thm:theta_phi}
The risk premium is given by
$\theta_{t}=\theta(X_{t})$
where $\theta(\cdot):=(\sigma\phi'\phi^{-1})(\cdot).$
We thus have that $dB_{t}=-\theta(X_{t})\,dt+dW_{t}.$
\end{thm}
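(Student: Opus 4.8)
The plan is to read the claim directly off the coefficient-matching that already produced \eqref{eqn:Markovian_theta}. First I would apply the Ito formula to the Markovian representation \eqref{eqn:Markovian} of $L_t$, which yields the decomposition displayed just before the theorem, with drift $\bigl(\beta+\tfrac{1}{2}(\sigma^2\phi''\phi^{-1})(X_t)+(k\phi'\phi^{-1})(X_t)\bigr)L_t$ and with diffusion coefficient $(\sigma\phi'\phi^{-1})(X_t)\,L_t$ against $dW_t$. On the other hand, \eqref{eqn:RN_SDE} exhibits $L_t$ as a continuous semimartingale whose $dW_t$-coefficient is $\theta_t L_t$. Since the decomposition of a continuous semimartingale into a finite-variation part and a continuous local-martingale part is unique, I can equate the two $dW_t$-coefficients to get $\theta_t L_t=(\sigma\phi'\phi^{-1})(X_t)L_t$; dividing by $L_t>0$ gives $\theta_t=(\sigma\phi'\phi^{-1})(X_t)$.

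Next I would set $\theta(\cdot):=(\sigma\phi'\phi^{-1})(\cdot)$, which is a well-defined continuous deterministic function on $(c,d)$ because $\sigma$ is continuously differentiable there and $\phi\in C^2$ with $\phi>0$ by Assumption \ref{assume:Markovian}. With this notation the identity of the previous step reads $\theta_t=\theta(X_t)$, which is the first assertion of the theorem.

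For the second assertion I would merely rearrange the Girsanov relation \eqref{eqn:Girsanov}. Writing $dW_t=\theta_t\,dt+dB_t$ and substituting $\theta_t=\theta(X_t)$ from the first part gives $dB_t=dW_t-\theta(X_t)\,dt=-\theta(X_t)\,dt+dW_t$, exactly as claimed. This step is pure algebra on \eqref{eqn:Girsanov} together with the definition \eqref{eqn:rho} of the risk premium, so no further work is needed.

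The only point requiring care — and hence the nearest thing to an obstacle — is the justification for matching coefficients, which rests on the uniqueness of the semimartingale decomposition. To invoke it cleanly I would confirm that both $\int_0^t(\sigma\phi'\phi^{-1})(X_s)L_s\,dW_s$ and $\int_0^t\theta_s L_s\,dW_s$ are genuine (local) martingales under the standing integrability implicit in Assumptions \ref{assume:Markovian}--\ref{assume:interest_rate}, so that equality of the two representations forces equality of their local-martingale parts and hence of their integrands. Apart from this, the theorem is essentially a repackaging of \eqref{eqn:Markovian_theta} with the elementary manipulation of \eqref{eqn:Girsanov}.
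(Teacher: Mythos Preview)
Your proposal is correct and mirrors the paper's own argument: the paper derives $\theta_t=(\sigma\phi'\phi^{-1})(X_t)$ by matching the $dW_t$-coefficients in the two expressions for $dL_t$ (exactly your first step) and then simply cites \eqref{eqn:Girsanov} and \eqref{eqn:rho} to conclude the $dB_t$ formula. Your added remark on the uniqueness of the semimartingale decomposition is a welcome clarification, but otherwise the approach is the same.
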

\noindent This theorem explains the relation between the risk premium and the pricing kernel $L_t.$

The purpose of this article is to determine upper and lower bounds on $\theta(\cdot)$ based on $k(\cdot),\,\sigma(\cdot)$ and $r(\cdot).$
The positive function $\phi(\cdot)$ and the positive number $\beta$ are assumed to be 
unknown.
The main idea is to determine the properties of all of the possible $\phi(\cdot)$'s and $\beta$'s and then to 
obtain upper and lower bounds on the possible $(\phi'\phi^{-1})(\cdot)$ values.
From \eqref{eqn:Markovian_theta}, we can determine the bounds of the risk premium $\theta_t.$

\section{Intrinsic bounds}
\label{sec:intrinsic_bounds}
We are interested in a solution pair $(\lambda,h)$ of
$\mathcal{L}h=-\lambda h$ with positive function $h.$
There are two possibilities.
\begin{itemize}[noitemsep,nolistsep]
\item[\textnormal{(i)}] there is no positive solution $h$ for any $\lambda\in\mathbb{R}$, or
\item[\textnormal{(ii)}] there exists a number $\overline{\beta}$ such that it
has two linearly independent positive solutions for $\lambda<\overline{\beta},$ has no positive solution for
$\lambda>\overline{\beta}$ and has one or two linearly independent solutions for $\lambda=\overline{\beta}.$  
\end{itemize}
Refer to page 146 and 149 in \cite{Pinsky}.
In this article, we implicitly assumed the second case by Assumption \ref{assume:Markovian}.
\begin{defi}
For each $\lambda$ with $\lambda\leq\overline{\beta},$ 
we say $(\lambda,h)$ is a candidate pair if $(\lambda,h)$ is a solution pair of $\mathcal{L}h=-\lambda h$ and if $h(\xi)=1$ (i.e., $h$ is normalized).
We define the candidate set by
$$\mathcal{C}_\lambda:=\{\,h'(\xi)\in\mathbb{R}\,|\,\mathcal{L}h=-
\lambda h,\, h(\xi)=1,\,h(\cdot)>0 \,\}\;.$$
\end{defi}
\noindent It is known that $\mathcal{C}_\lambda$ is a connected compact set. Refer to \cite{Park14b} or \cite{Pinsky}.
Denote the functions corresponding to $\max\mathcal{C}_\lambda$ and $\min\mathcal{C}_\lambda$
by $H_\lambda$ and $h_\lambda,$ respectively.
It is assumed that $H_\lambda(\xi)=h_\lambda(\xi)=1.$

For a solution pair $(\lambda,h)$ with $h>0,$
it is easily checked that
$$e^{\lambda t-\int_{0}^{t} r(X_{s})ds}\,h(X_{t})\,h^{-1}(\xi)$$
is a local martingale under $\mathbb{Q}.$
To be a Radon-Nikodym derivative, this should be a martingale. 
Thus, we are interested in solution pairs that induces martingales.
\begin{defi}
Let $(\lambda,h)$ be a candidate pair. 
We say $(\lambda,h)$ is a admissible pair if $$e^{\lambda t-\int_{0}^{t} r(X_{s})ds}\,h(X_{t})\,h^{-1}(\xi)$$ is a martingale under $\mathbb{Q}.$ In this case, 
a measure obtained
from the risk-neutral measure $\mathbb{Q}$
by the Radon-Nikodym derivative 
$$\left.\frac{\,d\,\cdot\,}{d\mathbb{Q}}\right|_{\mathcal{F}_{t}}=e^{\lambda t-\int_{0}^{t} r(X_{s})ds}\,h(X_{t})\,h^{-1}(\xi)$$
is called
{\em the transformed measure} with respect to the pair $(\lambda,h).$
\end{defi}

We now investigate the bounds of the risk premium.
We set
\begin{equation*}
\begin{aligned}
\ell:&=\inf\{0\leq\lambda\leq\overline{\beta}\,|\,(\lambda,h_\lambda)\text{ is an admissible pair}\} \\
L:&=\inf\{0\leq\lambda\leq\overline{\beta}\,|\,(\lambda,H_\lambda)\text{ is an admissible pair}\}\;.
\end{aligned}
\end{equation*}
Two functions $h_\ell$ and $H_L$ will play a crucial role in determining the bounds of the risk 
premium $\theta_t.$
To see this, we need the following proposition.
\begin{prop} \label{prop:relation}
Let $\alpha<\lambda$ and let $(\lambda,h)$ be a candidate. Then,
\begin{equation*}
\begin{aligned}
(h_\alpha'h_\alpha^{-1})(x)\leq(h'h^{-1})(x)\leq(H_\alpha'H_\alpha^{-1})(x)\;.
\end{aligned}
\end{equation*}
\end{prop}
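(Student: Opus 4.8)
The plan is to pass to logarithmic derivatives and compare the resulting Riccati equations. For a positive solution of $\mathcal{L}h=-\lambda h$, set $w:=h'h^{-1}$; substituting $h'=wh$ and $h''=(w'+w^{2})h$ into $\mathcal{L}h=-\lambda h$ and dividing by $h$ shows that $w$ solves the Riccati equation $w'=-w^{2}-\tfrac{2k}{\sigma^{2}}w+\tfrac{2(r-\lambda)}{\sigma^{2}}=:F_\lambda(x,w)$. Since the proposition compares a candidate $h$ at level $\lambda$ with the extremal solutions $h_\alpha,H_\alpha$ at the \emph{lower} level $\alpha$, I would split the estimate into a same-level step and a cross-level step: first bound $w$ by the extremes $h_\lambda,H_\lambda$ at its own level, and then prove that the extremal logarithmic derivatives are monotone in the eigenvalue.

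Step 1 (same level, via the Wronskian). Because $\mathcal{C}_\lambda$ is a connected compact set whose endpoints are realized by $h_\lambda$ and $H_\lambda$, every normalized candidate $h$ at level $\lambda$ is a convex combination $h=a\,h_\lambda+b\,H_\lambda$ with $a,b\ge 0$, $a+b=1$. A direct computation gives $(h'h^{-1})-(h_\lambda'h_\lambda^{-1})=b\,W\,(h\,h_\lambda)^{-1}$, where $W:=h_\lambda H_\lambda'-h_\lambda'H_\lambda$ is the Wronskian. Writing $\mathcal{L}h=-\lambda h$ as $h''+\tfrac{2k}{\sigma^{2}}h'+\tfrac{2(\lambda-r)}{\sigma^{2}}h=0$, Abel's identity gives $W'=-\tfrac{2k}{\sigma^{2}}W$, so $W$ keeps the sign of $W(\xi)=H_\lambda'(\xi)-h_\lambda'(\xi)=\max\mathcal{C}_\lambda-\min\mathcal{C}_\lambda\ge 0$; hence $W\ge 0$ and $(h'h^{-1})\ge (h_\lambda'h_\lambda^{-1})$ on all of $(c,d)$, the symmetric computation giving $(h'h^{-1})\le (H_\lambda'H_\lambda^{-1})$. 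It therefore suffices to prove the cross-level monotonicity of the extremes, namely $(h_\alpha'h_\alpha^{-1})\le (h_\lambda'h_\lambda^{-1})$ and $(H_\lambda'H_\lambda^{-1})\le (H_\alpha'H_\alpha^{-1})$.

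Step 2 (cross level, via a Riccati sign trick). Write $u:=H_\lambda'H_\lambda^{-1}$ and $v:=H_\alpha'H_\alpha^{-1}$ and set $\delta:=v-u$. Subtracting the Riccati equations for the two levels and factoring the quadratic difference yields the linear identity
\[
\delta'=-\Big(u+v+\tfrac{2k}{\sigma^{2}}\Big)\delta+\tfrac{2(\lambda-\alpha)}{\sigma^{2}}.
\]
The crucial point is that the inhomogeneous term $\tfrac{2(\lambda-\alpha)}{\sigma^{2}}$ is strictly positive because $\alpha<\lambda$; hence, with $P:=u+v+\tfrac{2k}{\sigma^{2}}$, the function $\Phi:=\delta\,\exp\!\big(\int_\xi^{x}P\big)$ is strictly increasing, so $\delta$ can change sign at most once and only from negative to positive as $x$ increases. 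Consequently $\delta\ge 0$ on all of $(c,d)$ as soon as $\delta\ge 0$ near the left endpoint $c$; the symmetric computation for $h_\alpha,h_\lambda$ (where the inhomogeneous term now carries the opposite sign) reduces the lower estimate to the analogous sign condition near the right endpoint $d$.

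Step 3 (the boundary anchor — the main obstacle). It remains to show $\delta\ge 0$ near $c$, i.e. that the maximal solution $H_\alpha$ at the lower level has the larger logarithmic derivative near $c$. Here I would use the identification (cf. \cite{Pinsky}, and implicit in the construction of $h_\lambda,H_\lambda$) of $H_\alpha$ as the minimal positive solution of $(\mathcal{L}+\alpha)\,\cdot=0$ at $c$ and of $h_\alpha$ as the minimal one at $d$. Since $(\mathcal{L}+\alpha)H_\lambda=-(\lambda-\alpha)H_\lambda<0$, the function $H_\lambda$ is a strict positive supersolution at level $\alpha$, so the minimal solution decays strictly faster: $q:=H_\alpha H_\lambda^{-1}\to 0$ as $x\to c^{+}$. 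If $\delta=(\log q)'<0$ on an interval $(c,x^{\ast})$, then $q$ would be positive yet decreasing there while tending to $0^{+}$ at $c$, which is impossible; hence $\delta\ge 0$ near $c$, and the $d$-analogue handles $h_\alpha,h_\lambda$. Chaining the two steps gives $(h_\alpha'h_\alpha^{-1})\le (h_\lambda'h_\lambda^{-1})\le (h'h^{-1})\le (H_\lambda'H_\lambda^{-1})\le (H_\alpha'H_\alpha^{-1})$, as claimed. The genuinely delicate part is Step 3: turning ``$H_\lambda$ is a strict supersolution'' into ``$H_\alpha$ decays strictly faster at $c$'' rests on the boundary classification of minimal positive solutions in \cite{Pinsky}, whereas Steps 1 and 2 are elementary Wronskian and Riccati computations.
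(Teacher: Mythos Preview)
Your Riccati framework is exactly the one the paper uses: writing the difference of logarithmic derivatives and exploiting that the inhomogeneous term $2(\lambda-\alpha)/\sigma^{2}$ forces every zero of the difference to be a strict downcrossing. The paper, however, organizes the argument differently. It does \emph{not} pass through the same-level extremes $h_\lambda,H_\lambda$; instead it compares $h$ directly with $h_\alpha$ and $H_\alpha$. The anchor is placed at the base point $\xi$ rather than at the boundary: two preparatory lemmas show (i) if $h'(\xi)\le h_\alpha'(\xi)$ then $\int_\xi^{d}(h_\alpha/q)^{-2}<\infty$, and (ii) for the extremal solution $h_\alpha$ one always has $\int_\xi^{d}(h_\alpha/q)^{-2}=\infty$ (here $q=e^{-\int k/\sigma^{2}}$). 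Combining (i) and (ii) forces $h'(\xi)>h_\alpha'(\xi)$, and then the Riccati sign argument propagates. The upper inequality is handled by the symmetric integral characterization of $H_\alpha$ at the left endpoint. So the paper's ``boundary input'' is an integral-divergence characterization of the extremes, proved in a self-contained way, rather than an appeal to minimal-growth theory.

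Your Steps~1 and~2 are correct and clean; the Wronskian/Abel argument for the same level is in fact a pleasant simplification of that part. The real issue is Step~3. You need $H_\alpha/H_\lambda\to 0$ at $c$, and you argue: $H_\lambda$ is a strict $(\mathcal L+\alpha)$-supersolution, $H_\alpha$ is the solution of minimal growth at $c$, hence the ratio tends to $0$. The standard minimal-growth statement in Pinsky gives only \emph{boundedness} of $H_\alpha/v$ for any positive supersolution $v$ near $c$; upgrading ``bounded'' to ``$\to 0$'' from strictness of the supersolution is precisely the content that needs proof, and it is not a one-line consequence of the boundary classification. In effect, what you are deferring to Pinsky is equivalent to the paper's Lemma on $\int^{c}V^{-2}=\infty$ for the extremal $V$, which the paper actually establishes. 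So the proposal is sound in outline, but Step~3 is the whole difficulty and is not yet closed; either cite a precise result giving the strict decay of the ratio, or prove the integral divergence for $H_\alpha$ (and for $h_\alpha$ at $d$) as the paper does, after which your Riccati monotonicity argument goes through.
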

\noindent See Appendix \ref{app:pf_relation} for proof. 
This proposition says that if a candidate pair $(\lambda,h)$ with $\lambda>0$ is an admissible pair, then 
\begin{equation*}
\begin{aligned}
(h_\ell'h_\ell^{-1})(x)\leq(h'h^{-1})(x)\leq(H_L'H_L^{-1})(x)\;.
\end{aligned}
\end{equation*}
This equation gives upper and lower bounds on the risk premium.
The only information that we know about the principal pair $(\beta,\phi)$ is that $\beta>0$ and that
$(\beta,\phi)$ is an admissible pair.
Thus, we can 
conclude that
$$(h_\ell'h_\ell^{-1})(x)\leq(\phi'\phi^{-1})(x)\leq(H_L'H_L^{-1})(x)\;.$$
By Theorem \ref{thm:theta_phi}, we obtain the main theorem of this article.

\begin{thm} \label{thm:intrinsic_bounds}\textnormal{(Intrinsic Bounds of Risk Premium)} \newline
Let $\theta_t$ be the risk premium. Then,
\begin{equation*}
\begin{aligned}
(\sigma h_\ell'h_\ell^{-1})(X_t)\leq\theta_t\leq(\sigma H_L'H_L^{-1})(X_t)\;.
\end{aligned}
\end{equation*}
\end{thm}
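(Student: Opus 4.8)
The plan is to deduce Theorem \ref{thm:intrinsic_bounds} as a corollary of Theorem \ref{thm:theta_phi} together with the consequence of Proposition \ref{prop:relation} highlighted just above the statement. By Theorem \ref{thm:theta_phi} we have $\theta_t=(\sigma\phi'\phi^{-1})(X_t)$, where $(\beta,\phi)$ is the principal pair. Because $\sigma(X_t)\ge0$ (indeed $\sigma>0$ on $(c,d)$ by Assumption \ref{assume:X}), multiplying a pointwise inequality in $\phi'\phi^{-1}$ by $\sigma(X_t)$ preserves its direction. Thus the theorem follows at once, upon evaluating at $x=X_t$ and multiplying by $\sigma$, from the single inequality
$$(h_\ell'h_\ell^{-1})(x)\le(\phi'\phi^{-1})(x)\le(H_L'H_L^{-1})(x),$$
and the whole task reduces to bounding $\phi'\phi^{-1}$.

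First I would record that $(\beta,\phi)$ is an admissible candidate pair with $\beta>0$: after the harmless normalization $\phi(\xi)=1$ (which changes neither $\phi'\phi^{-1}$ nor $L_t$), the process $e^{\beta t-\int_0^t r(X_s)\,ds}\phi(X_t)\phi^{-1}(\xi)$ equals $L_t\,e^{-\int_0^t r_s\,ds}=d\mathbb{P}/d\mathbb{Q}|_{\mathcal{F}_t}$, a genuine $\mathbb{Q}$-martingale, and $\beta>0$ by Assumption \ref{assume:Markovian}. I would also note the two monotonicity facts that drop out of Proposition \ref{prop:relation} by taking the higher-level candidate to be $h_\lambda$ or $H_\lambda$: the map $\mu\mapsto(h_\mu'h_\mu^{-1})(x)$ is non-decreasing and $\mu\mapsto(H_\mu'H_\mu^{-1})(x)$ is non-increasing. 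Then, applying Proposition \ref{prop:relation} with the candidate $\phi$ at level $\beta$ and comparison level $\alpha=\ell$ gives $(h_\ell'h_\ell^{-1})(x)\le(\phi'\phi^{-1})(x)$ whenever $\ell<\beta$, and with $\alpha=L$ gives $(\phi'\phi^{-1})(x)\le(H_L'H_L^{-1})(x)$ whenever $L<\beta$; the boundary cases $\ell=\beta$ and $L=\beta$ are handled by the minimality of $h_\beta$ and the maximality of $H_\beta$ among candidates at level $\beta$, which is the $\alpha=\lambda$ instance of the comparison underlying Proposition \ref{prop:relation}.

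The crux --- and the step I expect to be the main obstacle --- is to guarantee that these comparison levels are legitimately available, i.e.\ that $\ell\le\beta$ and $L\le\beta$. This is exactly where the admissibility hypothesis does its work, and it cannot be read off from Proposition \ref{prop:relation} alone. The approach I would take is to show that admissibility of the \emph{single} candidate pair $(\beta,\phi)$ already forces the extremal candidates $h_\beta$ and $H_\beta$ to be admissible, so that $\beta$ lies in both sets defining the infima $\ell$ and $L$. The natural route is the $h$-transform picture behind the definition of an admissible pair: the martingale property of $e^{\lambda t-\int_0^t r(X_s)\,ds}h(X_t)h^{-1}(\xi)$ is equivalent to conservativeness (no loss of mass at the endpoints $c,d$) of the diffusion obtained by the $h$-transform, and this property is monotone in the eigenvalue $\lambda$ and respects the ordering of positive solutions. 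I would make this precise via the recurrence and boundary-classification results of Park \cite{Park14b} and Pinsky \cite{Pinsky}; the careful bookkeeping of the behavior at each of the two boundaries, and the verification that conservativeness of the $\phi$-transform transfers to the $h_\beta$- and $H_\beta$-transforms, is the delicate part of the whole argument. Once $\ell\le\beta$ and $L\le\beta$ are secured, the assembly in the previous paragraph closes the proof.
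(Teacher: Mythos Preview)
Your approach is the same as the paper's: the paper deduces the theorem from Theorem \ref{thm:theta_phi} together with the inequality
\[
(h_\ell'h_\ell^{-1})(x)\le(\phi'\phi^{-1})(x)\le(H_L'H_L^{-1})(x),
\]
obtained from Proposition \ref{prop:relation}, and then multiplies through by the positive factor $\sigma(X_t)$. Where you differ is only in the level of care. The paper's entire argument is the sentence immediately following Proposition \ref{prop:relation}: it asserts that the proposition yields the displayed inequality for any admissible candidate pair $(\lambda,h)$ with $\lambda>0$, and then specializes to $(\beta,\phi)$. It does not separately verify that $\ell\le\beta$ and $L\le\beta$, nor does it argue that admissibility of $(\beta,\phi)$ transfers to $(\beta,h_\beta)$ and $(\beta,H_\beta)$. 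So the step you single out as ``the crux'' is one the paper simply glosses over; your proposed $h$-transform and boundary-classification route via \cite{Park14b} and \cite{Pinsky} is a sensible way to make it rigorous, but the paper itself neither carries out nor cites such an argument.
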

\noindent This theorem implies that
we can determine the range of the risk premium when a risk-neutral measure is given.
Upper and lower bounds can then be calculated using option prices.
In the next section, as an example, we show that in the classical Black-Scholes model, the risk premium satisfies
$$-\frac{2r}{v}\leq\theta_t\leq v$$
where $r$ is the interest rate and $v$ is the volatility of the stock.

\begin{cor}
Let $\theta_t$ be the risk premium. Then,
\begin{equation*}
\begin{aligned}
(\sigma h_0'h_0^{-1})(X_t)\leq\theta_t\leq(\sigma H_0'H_0^{-1})(X_t)\;.
\end{aligned}
\end{equation*}
\end{cor}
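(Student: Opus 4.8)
The plan is to obtain the corollary directly from Theorem \ref{thm:intrinsic_bounds} by relaxing its sharper bounds to the explicit bounds attached to $\lambda=0$ (these $h_0,H_0$ are precisely the two special solutions of $\mathcal{L}h=0$ highlighted in the introduction, which are computable from $k,\sigma,r$ alone). The theorem already supplies
$$(\sigma h_\ell'h_\ell^{-1})(X_t)\leq\theta_t\leq(\sigma H_L'H_L^{-1})(X_t),$$
so it suffices to compare these bounds with the $\lambda=0$ bounds, i.e.\ to prove
$$(h_0'h_0^{-1})(x)\leq(h_\ell'h_\ell^{-1})(x)\qquad\text{and}\qquad(H_L'H_L^{-1})(x)\leq(H_0'H_0^{-1})(x)$$
for every $x$. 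Multiplying through by $\sigma(x)>0$ (Assumption \ref{assume:X}) and chaining with the theorem then closes the argument.

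First I would record the two facts that set up the comparison. By definition $\ell$ and $L$ are infima over the set $\{0\leq\lambda\leq\overline{\beta}\}$, hence $\ell\geq0$ and $L\geq0$. Moreover the principal eigenvalue obeys $0<\beta\leq\overline{\beta}$, so $\overline{\beta}>0$ and $\lambda=0$ lies in the admissible range; consequently the candidate set $\mathcal{C}_0$ is nonempty and $h_0,H_0$ (the functions realizing $\min\mathcal{C}_0$ and $\max\mathcal{C}_0$) are well defined.

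For the lower bound, if $\ell=0$ then $h_\ell=h_0$ and the inequality is an equality; if $\ell>0$, I apply Proposition \ref{prop:relation} with $\alpha=0<\ell=\lambda$ and the candidate pair $(\ell,h_\ell)$, whose left-hand inequality gives exactly $(h_0'h_0^{-1})(x)\leq(h_\ell'h_\ell^{-1})(x)$. Symmetrically, for the upper bound, if $L=0$ the inequality is an equality; if $L>0$, I apply the proposition with $\alpha=0<L=\lambda$ and the candidate pair $(L,H_L)$, whose right-hand inequality yields $(H_L'H_L^{-1})(x)\leq(H_0'H_0^{-1})(x)$. In effect Proposition \ref{prop:relation} encodes that $\lambda\mapsto h_\lambda'h_\lambda^{-1}$ is nondecreasing and $\lambda\mapsto H_\lambda'H_\lambda^{-1}$ is nonincreasing, so the $\lambda=0$ bounds are simply the loosest (outermost) members of these monotone families.

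I do not expect a genuine obstacle here, since all the analytic content is already carried by Proposition \ref{prop:relation} and Theorem \ref{thm:intrinsic_bounds}. The one point demanding a moment's care is the strict hypothesis $\alpha<\lambda$ in the proposition, which is why the degenerate cases $\ell=0$ and $L=0$ must be peeled off and disposed of separately (trivially, as equalities) before invoking the proposition.
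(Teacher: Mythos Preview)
Your proposal is correct. The paper does not give an explicit proof of this corollary, treating it as immediate; the most direct route it suggests is to apply Proposition~\ref{prop:relation} with $\alpha=0$ and $(\lambda,h)=(\beta,\phi)$ (using $\beta>0$), which yields $(h_0'h_0^{-1})(x)\leq(\phi'\phi^{-1})(x)\leq(H_0'H_0^{-1})(x)$ in one step, then multiply by $\sigma$ and invoke Theorem~\ref{thm:theta_phi}. Your approach---first passing through Theorem~\ref{thm:intrinsic_bounds} and then relaxing the $\ell,L$ bounds to the $0$ bounds via Proposition~\ref{prop:relation}---is slightly more circuitous but equally valid, and your handling of the degenerate cases $\ell=0$, $L=0$ is appropriate given the strict hypothesis $\alpha<\lambda$ in the proposition.
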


\noindent This gives rough bounds on the risk premium. In general, two solutions $h_0$ and $H_0$ are more straightforward to find than $h_\ell$ and $H_L.$ \newline

We can find better bounds if some information of $\mathbb{P}$-dynamics of $X_t$ is known.
As mentioned in Section \ref{sec:intro}, if $X_t$ is recurrent under the objective measure, then we can find the exact risk premium. Thus, the bounds is not informative.
For example, if the state variable is an interest rate, which is usually recurrent under the objective measure, then we can find the precise risk premium.
For more details, see \cite{Borovicka14},\cite{Park14b},\cite{Qin14b} and \cite{Walden13}.

We can find better bounds when it is known that the state variable is non-attracted to the left (or right) boundary under the objective measure.
This is a reasonable assumption under some situations. For example, a stock price process is usually not attracted to zero boundary. 
\begin{prop}
The process $X_t$ is non-attracted to the left boundary under the transformed measure with respect to $(\lambda,h)$ if only if
$h=H_\lambda.$
\end{prop}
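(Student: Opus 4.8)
The plan is to first identify the law of $X_t$ under the transformed measure, then convert the phrase ``non-attracted to the left boundary'' into an integral test on $h$, and finally show that this test singles out precisely the extreme solution $H_\lambda$. First I would compute the transformed dynamics. Writing $Z_t=e^{\lambda t-\int_0^t r(X_s)\,ds}\,h(X_t)\,h^{-1}(\xi)$ for the Radon--Nikodym density and applying the Ito formula together with $\mathcal{L}h=-\lambda h$, the finite-variation part cancels and one obtains $dZ_t=Z_t\,(\sigma h'h^{-1})(X_t)\,dW_t$. By Girsanov's theorem (equivalently, by the $h$-transform identity $\frac{1}{h}(\mathcal{L}+\lambda)(hf)=\frac12\sigma^2 f''+(k+\sigma^2 h'h^{-1})f'$, whose zeroth-order term vanishes because $\mathcal{L}h+\lambda h=0$), the process $X_t$ is, under the transformed measure, again a diffusion with the same volatility $\sigma$ but with drift $\tilde k:=k+\sigma^2 h'h^{-1}$.

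Next I would compute the scale density of this transformed diffusion. Solving $\frac12\sigma^2\tilde s''+\tilde k\,\tilde s'=0$ gives $\tilde s'(x)=\exp\!\big(-\int^x 2\tilde k/\sigma^2\big)$, and since $2\tilde k/\sigma^2=2k/\sigma^2+2h'/h$ the $h$-factor integrates to $\log h^2$, so that $\tilde s'(x)\propto s_0'(x)/h(x)^2$, where $s_0'(x)=\exp\!\big(-\int^x 2k/\sigma^2\big)$ is the scale density of the $\mathbb{Q}$-diffusion $dX_t=k\,dt+\sigma\,dW_t$. By the scale-function criterion for attraction (see \cite{Pinsky}), $X_t$ is non-attracted to the left boundary $c$ under the transformed measure precisely when the scale function diverges there, i.e. $\int_c\tilde s'=\infty$, which by the previous line is equivalent to the integral test
\begin{equation*}
\int_c \frac{s_0'(y)}{h(y)^2}\,dy = \infty. \tag{$\star$}
\end{equation*}

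The crux is to show that $(\star)$ holds if and only if $h=H_\lambda$. Here I would put the equation $\mathcal{L}h+\lambda h=0$ into self-adjoint form $(ph')'+qh=0$ with $p=1/s_0'$ and $q=\frac{2(\lambda-r)}{\sigma^2}p$; then $(\star)$ reads $\int_c dy/(p\,h^2)=\infty$, which is exactly the classical Sturm--Liouville criterion for $h$ to be the \emph{principal} (recessive, minimal) solution at the endpoint $c$, every linearly independent positive solution being nonprincipal there and making the integral converge (cf. \cite{Pinsky}). It then remains to locate the principal solution at $c$ inside the normalized family $\mathcal{C}_\lambda$. For this I would use the Wronskian: if $h_1,h_2$ are candidates with $h_1'(\xi)>h_2'(\xi)$, then $p\,(h_1'h_2-h_1 h_2')$ is a constant equal to $p(\xi)(h_1'(\xi)-h_2'(\xi))>0$, whence $(h_1/h_2)'>0$ and $h_1/h_2$ is increasing; thus the candidate with the \emph{largest} slope at $\xi$, namely $H_\lambda=\max\mathcal{C}_\lambda$, is the smallest near $c$ and is therefore the recessive solution there. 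Consequently $H_\lambda$ is the principal solution at $c$ while all other candidates are nonprincipal, which yields the claimed equivalence.

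I expect the main obstacle to be this last identification, i.e. rigorously matching the integral dichotomy $(\star)$ with the principal/nonprincipal classification and confirming that the principal solution at $c$ is the extreme point $H_\lambda$ (and not $h_\lambda$). This rests on the Sturm--Liouville facts in \cite{Pinsky} together with the description of the extreme normalized solutions as the minimal positive solutions at the two boundaries developed in \cite{Park14b}. By contrast, the Ito/Girsanov computation and the scale-density calculation are routine once the signs are tracked carefully.
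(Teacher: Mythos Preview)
Your outline is sound. Note, however, that the paper does not give an in-house proof of this proposition: immediately after stating it the authors write ``See \cite{Park14b} for proof,'' so there is no self-contained argument in the present paper to compare your proposal against.

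That said, one direction of your argument does appear in the paper's appendix, in slightly different clothing. In the paper's notation $H_\lambda=Vq$ with $q(x)=e^{-\int_0^x k/\sigma^2}$ one has $q^2=s_0'$ and hence $V^{-2}=s_0'/H_\lambda^2$; your integral test $(\star)$ for $h=H_\lambda$ is therefore exactly the statement $\int_{-\infty}^{0}V^{-2}=\infty$ of Lemma~\ref{lem:infinite}. The paper proves that lemma not via the principal/recessive dichotomy you invoke, but by a short contradiction argument using the representation $h(x;c)=H_\lambda(x)\big(1+c\int_0^x V^{-2}\big)$ of the general normalized solution together with the maximality of $H_\lambda'(0)$ in $\mathcal{C}_\lambda$. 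Your Wronskian monotonicity step and this general-solution argument are two faces of the same fact; either route closes the implication ``$h=H_\lambda\Rightarrow$ non-attraction.'' The converse implication (any candidate $h\neq H_\lambda$ makes the integral finite, hence the left boundary is attracting) is not proved anywhere in this paper and is precisely the piece that relies on the Sturm--Liouville principal-solution theory from \cite{Pinsky} or the boundary classification in \cite{Park14b}, exactly as you anticipated in your final paragraph.
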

\noindent See \cite{Park14b} for proof. Similarly, the process $X_t$ is non-attracted to the right boundary under the transformed measure with respect to $(\lambda,h)$ if only if
$h=h_\lambda.$ This proposition says the following theorem.
\begin{thm} \label{thm:non-attracted}
If the state process $X_t$ is non-attracted to the left boundary, then 
the risk premium $\theta_t$ satisfies
$$(\sigma H_{\overline{\beta}}'H_{\overline{\beta}}^{-1})(X_t)\leq\theta_t\leq(\sigma H_L'H_L^{-1})(X_t)\;.$$
\end{thm}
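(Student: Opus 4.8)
The plan is to combine the characterization of non-attraction stated in the proposition immediately preceding the theorem with a monotonicity property of the maximal solutions $H_\lambda$ that is already implicit in Proposition \ref{prop:relation}.

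First I would pin down the measure under which the hypothesis is to be read and use it to identify $\phi$. Since $L_t=e^{\int_0^t r_s\,ds}/\Sigma_t$ with $\Sigma_t=d\mathbb{Q}/d\mathbb{P}|_{\mathcal{F}_t}$, the Markovian form \eqref{eqn:Markovian} gives
$$\Sigma_t^{-1}=e^{\beta t-\int_0^t r(X_s)\,ds}\,\phi(X_t)\,\phi^{-1}(\xi),$$
which is exactly the Radon--Nikodym derivative $d\mathbb{P}/d\mathbb{Q}|_{\mathcal{F}_t}$ defining the transformed measure with respect to the principal pair $(\beta,\phi)$. Hence the objective measure $\mathbb{P}$ \emph{is} the transformed measure associated with $(\beta,\phi)$, so the hypothesis that $X_t$ is non-attracted to the left boundary (under $\mathbb{P}$) is precisely the hypothesis of the preceding proposition applied to $(\lambda,h)=(\beta,\phi)$. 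That proposition then forces $\phi=H_\beta$, and by Theorem \ref{thm:theta_phi} we obtain $\theta_t=(\sigma\phi'\phi^{-1})(X_t)=(\sigma H_\beta'H_\beta^{-1})(X_t)$.

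Next I would record a monotonicity statement for the family $\{H_\lambda\}$: for $\alpha<\lambda$, applying Proposition \ref{prop:relation} to the candidate $h=H_\lambda$ gives $(H_\lambda'H_\lambda^{-1})(x)\leq(H_\alpha'H_\alpha^{-1})(x)$, so $\lambda\mapsto(H_\lambda'H_\lambda^{-1})(x)$ is non-increasing. I would then locate $\beta$ in $[L,\overline{\beta}]$: the principal pair is admissible with $0<\beta\leq\overline{\beta}$, and since $\phi=H_\beta$ this says $(\beta,H_\beta)$ is admissible, so the definition of $L$ yields $L\leq\beta\leq\overline{\beta}$.

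Finally I would conclude from the monotonicity. From $\beta\leq\overline{\beta}$ we get $(H_{\overline{\beta}}'H_{\overline{\beta}}^{-1})(x)\leq(H_\beta'H_\beta^{-1})(x)$, and from $L\leq\beta$ we get $(H_\beta'H_\beta^{-1})(x)\leq(H_L'H_L^{-1})(x)$ (the boundary cases $\beta=\overline{\beta}$ or $\beta=L$ being trivial equalities). Multiplying through by $\sigma(x)>0$ and evaluating at $x=X_t$ gives the asserted bounds, since $\theta_t=(\sigma H_\beta'H_\beta^{-1})(X_t)$. The main obstacle is the first step: correctly reading the non-attraction hypothesis as a statement under $\mathbb{P}$ and recognizing $\mathbb{P}$ as the transformed measure of the principal pair, so that the iff-characterization pins $\phi=H_\beta$. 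Once this identification is secured, the remaining bounds are a direct, two-sided application of the monotonicity contained in Proposition \ref{prop:relation}.
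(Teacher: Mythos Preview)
Your argument is correct and follows exactly the line the paper intends: identify $\mathbb{P}$ as the transformed measure of the principal pair so that the preceding proposition forces $\phi=H_\beta$, then sandwich $\beta$ between $L$ and $\overline{\beta}$ and invoke the monotonicity in $\lambda$ of $(H_\lambda'H_\lambda^{-1})$ furnished by Proposition~\ref{prop:relation}. The paper itself gives no written proof beyond ``This proposition says the following theorem,'' so your write-up is simply a careful unpacking of that one sentence.
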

\noindent Similarly, if the state process $X_t$ is non-attracted to the right boundary, then 
the risk premium $\theta_t$ satisfies
$$(\sigma h_\ell'h_\ell^{-1})(X_t)\leq\theta_t\leq(\sigma h_{\overline{\beta}}'h_{\overline{\beta}}^{-1})(X_t)\;.$$

\section{Returns of Stock}
\label{sec:appli}

In this section, we investigate the bounds of the risk premium when the state variable $X_t$ 
is the stock price process $S_t.$
In practice, the S$\&$P 500 index process, which can theoretically be regarded as a stock price 
process, is used as a state variable \cite{Audrino14}.
In this case, we can determine upper and lower bounds on the return of the stock process 
$S_t.$
Suppose that $S_t=X_t$ and that the interest rate is constant $r.$
Under the risk-neutral measure, the dynamics of $X_t$ is
$$dX_t=rX_t\,dt+\sigma(X_t)\,X_t\,dW_t\;.$$  
By Theorem \ref{thm:intrinsic_bounds}, the risk premium satisfies
$$(\sigma h_\ell'h_\ell^{-1})(x)\leq\theta(x)\leq(\sigma H_L'H_L^{-1})(x)$$
where $h_\ell$ and $H_L$ are the corresponding solutions. 
We obtain upper and lower bounds on the return $\mu_t$ by using equation \eqref{eqn:rho}.
$$r+(\sigma^2h_\ell'h_\ell^{-1})(X_t)\leq\mu_t\leq r+(\sigma^2H_L'H_L^{-1})(X_t)\;.$$

As an example, we explore the classical Black-Scholes model for stock price $X_t.$   
$$dX_t=rX_t\,dt+vX_t\,dW_t\,,\;X_0=1$$ 
for $v>0.$ The infinitesimal operator is 
$$\mathcal{L}h(x)=\frac{1}{2}v^2x^2h''(x)+rxh'(x)-rh(x)\;.$$ 
It can be easily shown that every solution pair of $\mathcal{L}h=-\lambda h$ induces a martingale, that is, every candidate pair is an admissible pair. Thus it is obtained that $\ell=L=0.$
We want to find the positive solutions of $\mathcal{L}h=0$
with $h(0)=1.$
The solutions are given by $h(x)=c\,x^{-\frac{2r}{v^2}}+(1-c)x$
for $0\leq c\leq 1.$
Thus $$h_0(x)=x^{-\frac{2r}{v^2}},\; H_0(x)=x\;.$$
The risk premium $\theta_t$ satisfies
$-\frac{2r}{v}\leq\theta_t\leq v.$
The upper and lower bounds of the return $\mu_t$ of $X_t$ is given by
$-r\leq\mu_t\leq r+v^2.$

We can find better bounds if we know that the stock price is non-attracted to zero boundary. 
By direct calculation, we have that $\overline{\beta}=\frac{(r+\frac{1}{2}v^2)^2}{2v^2}$ and that
$$H_{\overline{\beta}}(x)=x^{\frac{1}{2}-\frac{r}{v^2}},\; H_0(x)=e^x\;.$$
By Theorem \ref{thm:non-attracted}, the risk premium $\theta_t$ satisfies
$\frac{1}{2}-\frac{r}{v}\leq\theta_t\leq v.$
The upper and lower bounds of the return $\mu_t$ of $X_t$ is given by
$\frac{v}{2}\leq\mu_t\leq r+v^2.$

\section{Conclusion}
\label{sec:conclusion}
This article determined the possible range of the risk premium of the market using the market prices of options.
One of the key assumptions to achieve this result is that the market is Markovian driven by a state variable $X_t.$  
Under this assumption, we can transform the problem of determining the bounds into a second-order differential equation. We then obtain the upper and lower bounds of the risk premium by 
analyzing the differential equation.

We illuminated how the problem of the risk premium is transformed into a problem described by 
a second-order differential equation. The risk premium is determined by
$\theta_t=(\sigma\phi'\phi^{-1})(X_t)$
with a positive function $\phi(\cdot).$
We demonstrated that $\phi(\cdot)$ satisfies
$\mathcal{L}\phi=-\beta\phi$
for some positive number $\beta$, where $\mathcal{L}$ is a second-order operator that is 
determined by option prices.

We demonstrated that two special solutions $h_\ell$ and $H_L$ play a crucial role for determining upper and lower bounds on the risk premium.
The risk premium $\theta_t$ satisfies
\begin{equation*}
\begin{aligned}
(\sigma h_\ell'h_\ell^{-1})(X_t)\leq\theta_t\leq(\sigma H_L'H_L^{-1})(X_t)\;.
\end{aligned}
\end{equation*}
We also discussed better bounds when it is known that the state variable is non-attracted to the left (or right) boundary.
It is stated how this result can be applied to determine the range of return of an asset.

The following extensions for future research are suggested. First, it would be interesting to 
extend the process $X_t$ to a multidimensional process or a process with jump.
Second, it would be interesting to determine the bounds of the risk premium when the process 
$X_t$ is a non-Markov process. In this article, we discussed only a time-homogeneous Markov 
process.
Third, it would be interesting to explore more general forms of
the pricing kernel. We discussed only the case in which (the reciprocal of) the pricing kernel has 
the
form $e^{\beta t}\phi(X_{t}).$

\section*{Acknowledgement}
The authors highly appreciate the detailed valuable comments of the referee.

\appendix

\section{Proof of Proposition \ref{prop:relation}}
\label{app:pf_relation}

Hereafter, without loss of generality, we assume that $X_0=0,$ the left boundary of the range of $X_t$ is $-\infty$ and the right boundary of the range of $X_t$ is  $\infty$.

\begin{lemma} \label{lem:finite}
Let $\alpha<\lambda$ and let $(\alpha,g)$ and $(\lambda,h)$ be candidate pairs.
Then,
\begin{equation*}
\begin{aligned}
\int_{-\infty}^{0}v^{-2}(y)\,dy\textnormal{ is finite if } h'(0)\geq g'(0)\;,\\
\int_{0}^{\infty}v^{-2}(y)\,dy\textnormal{ is finite if } h'(0)\leq g'(0)\;.
\end{aligned}
\end{equation*}
where $g=vq$ and $q(x)=e^{-\int_{0}^{x}\frac{k(y)}{\sigma^{2}(y)}\,dy}.$ 
\end{lemma}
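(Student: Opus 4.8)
The plan is to reduce the equation $\mathcal{L}h=-\lambda h$ to Schr\"odinger (normal) form by the Liouville substitution $h=q\,u$, which is exactly the reason the function $v=g/q$ appears in the statement. First I would verify that, because $q'=-(k/\sigma^2)q$, the first-order term cancels and $u=h/q$ satisfies $\tfrac12\sigma^2u''=(W-\lambda)u$, where $W:=-\mathcal{L}q/q$ does not depend on the eigenvalue. Applying this to the two candidate pairs gives $\tfrac12\sigma^2 v''=(W-\alpha)v$ and $\tfrac12\sigma^2(h/q)''=(W-\lambda)(h/q)$, with $v,\,h/q>0$.

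The key step is to track the Wronskian $w:=v\,(h/q)'-v'\,(h/q)$ of these two positive solutions. Differentiating and substituting the two normal-form equations, the $W$-terms cancel and I obtain $w'=\tfrac{2}{\sigma^2}\,v\,(h/q)\,(\alpha-\lambda)$; since $\alpha<\lambda$ this is strictly negative, so $w$ is strictly decreasing. Evaluating at the normalization point $x=0$ (where $q(0)=1$ and $v(0)=(h/q)(0)=1$) gives $w(0)=h'(0)-g'(0)$, which is precisely the quantity whose sign is assumed. The second observation I would record is the identity $w/v^2=(\,(h/q)/v\,)'=(h/g)'$, so that $v^{-2}$ is an exact derivative of $\rho:=h/g$ divided by $w$.

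From here the two assertions follow by the same mechanism on the two half-lines. Suppose $h'(0)\ge g'(0)$, i.e. $w(0)\ge0$; since $w$ is strictly decreasing, $w>0$ on $(-\infty,0)$, hence $\rho$ is increasing there with $\rho(0)=1$ and a finite limit $\rho(-\infty)\in[0,1]$. For any $x_0<0$ one has $w\ge w(x_0)>0$ on $(-\infty,x_0]$, so $v^{-2}=\rho'/w\le \rho'/w(x_0)$ and $\int_{-\infty}^{x_0}v^{-2}\le \rho(x_0)/w(x_0)<\infty$; adding the integral over the compact piece $[x_0,0]$, where $v$ is continuous and positive, yields finiteness of $\int_{-\infty}^{0}v^{-2}$. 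The case $h'(0)\le g'(0)$ is symmetric: then $w<0$ on $(0,\infty)$, $\rho$ is decreasing with finite limit, and $\int_{0}^{\infty}v^{-2}$ is bounded the same way using $|w|\ge|w(x_1)|>0$ for $x\ge x_1>0$.

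The routine parts are the normal-form reduction and the Wronskian computation. The main obstacle is the borderline case $w(0)=0$ (when $h'(0)=g'(0)$), where the pointwise bound $v^{-2}=\rho'/w$ degenerates at $x=0$; this is handled, as above, by splitting off a compact neighborhood of $0$ and using strict monotonicity of $w$ to secure a positive lower bound for $|w|$ away from $0$. I would also make sure the monotone function $\rho=h/g$ genuinely has a finite limit at the infinite endpoint—which it does, being positive and monotone—so that $\int|\rho'|\le 1$ and the resulting bounds are finite.
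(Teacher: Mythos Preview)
Your argument is correct. Both proofs begin with the same Liouville substitution $h=qu$, $g=qv$, but from there the paper works with the logarithmic difference $\Gamma=u'/u-v'/v$, derives the Riccati equation
\[
\Gamma'=-\Gamma^{2}-\frac{2v'}{v}\Gamma-\frac{2(\lambda-\alpha)}{\sigma^{2}},
\]
argues $\Gamma>0$ on $(-\infty,0)$ by a barrier argument at $\Gamma=0$, and then integrates the identity $-2v'/v=\Gamma'/\Gamma+\Gamma+\tfrac{2(\lambda-\alpha)}{\sigma^{2}\Gamma}$ to bound $v^{-2}$ by $\Gamma\,e^{\int\Gamma}$. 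You instead track the Wronskian $w=vu'-v'u=uv\,\Gamma$, whose derivative $w'=\tfrac{2(\alpha-\lambda)}{\sigma^{2}}uv$ is strictly negative, so positivity of $w$ on $(-\infty,0)$ is immediate from $w(0)\ge0$; the identity $w/v^{2}=\rho'$ then gives $\int v^{-2}\le \rho(x_{0})/w(x_{0})$ in one line. The two routes are equivalent in content (the paper's final integrand $\Gamma\,e^{\int\Gamma}$ is exactly $\rho'/\rho(x_{0})$), but your Wronskian formulation avoids the Riccati detour and the ``if $\Gamma$ ever gets close to $0$'' heuristic, at the cost of needing the extra remark that $\rho=h/g$ is positive and monotone and hence has a finite limit at the endpoint. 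Your handling of the borderline case $w(0)=0$ by splitting off a compact neighborhood is the right fix and matches what the paper does implicitly by starting its estimate at $x_{0}<0$.
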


\begin{proof} 
Write $h=uq.$ Assume that $h'(0)\geq g'(0),$ equivalently $u'(0)\geq v'(0).$
Define $\Gamma=\frac{u'}{u}-\frac{v'}{v}.$
Then
$$\Gamma'=-\Gamma^{2}-\frac{2v'}{v}\Gamma-\frac{2(\lambda-\alpha)}{\sigma^{2}}\;.$$
Because $\Gamma(0)\geq0,$ we have that $\Gamma(x)>0$ for $x<0,$
since if $\Gamma$ ever gets close to $0,$ then term $-\frac{2(\lambda-\alpha)}{\sigma^{2}}$ dominates the right hand side of the equation.
Choose $x_{0}$ with $x_{0}<0.$
For $x<x_{0},$ we have
$$-\frac{2v'(x)}{v(x)}=\frac{\Gamma'(x)}{\Gamma(x)}+\Gamma(x)+\frac{2(\lambda-\alpha)}
{\sigma^{2}(x)}\cdot\frac{1}{\Gamma(x)}\;.$$
Integrating from $x_{0}$ to $x,$
$$-2\ln\frac{v(\,x\,)}{v(x_{0})}=\ln \frac{\Gamma(\,x\,)}{\Gamma(x_{0})}+\int_{x_{0}}^{x}
\Gamma(y)\,dy+\int_{x_{0}}^{x}\frac{2(\lambda-\alpha)}{\sigma^{2}(y)}\cdot\frac{1}{\Gamma(y)}
\,dy$$
which leads to
$$\frac{v^{2}(x_{0})}{v^{2}(\,x\,)}\leq \frac{\Gamma(\,x\,)}{\Gamma(x_{0})}\,e^{\int_{x_{0}}^{x}
\Gamma(y)\,dy}$$
for $x<x_{0}.$
Thus,
\begin{equation*}
\begin{aligned}
\int_{-\infty}^{x_{0}}\frac{1}{v^{2}(y)}\,dy
&\leq (\text{constant})\cdot\int_{-\infty}^{x_{0}}\Gamma(y)\,e^{\int_{x_{0}}^{y}\Gamma(w)dw}\,dy 
\\
&=(\text{constant})\cdot\left(1-e^{-\int_{-\infty}^{x_{0}}\Gamma(w)dw}\right)\\
&\leq (\text{constant})\\
&< \infty \;.
\end{aligned}
\end{equation*}
This implies that $\int_{-\infty}^{0}v^{-2}(y)\,dy$ is finite.
Similarly, we can show that
$\int_{0}^{\infty}v^{-2}(y)\,dy$ is finite if $h'(0)\leq g'(0).$
\end{proof}

\begin{lemma} \label{lem:infinite}
 Write $H_\lambda=Vq$ and $h_\lambda=vq$, where $q(x):=e^{-\int_{0}^{x}\frac{k(y)}{\sigma^{2}(y)}\,dy}.$ Then,
$$\int_{-\infty}^{0}V^{-2}(y)\,dy=\infty\quad\text{ and }\quad \int_{0}^{\infty}v^{-2}(y)\,dy=\infty\;.$$ 
\end{lemma}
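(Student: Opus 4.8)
The plan is to first remove the drift term and then argue by contradiction using a reduction-of-order construction. As in the setup of Lemma~\ref{lem:finite}, the substitution $h=uq$ with $q(x)=e^{-\int_0^x k(y)/\sigma^2(y)\,dy}$ turns $\mathcal{L}h=-\lambda h$ into a drift-free equation
\begin{equation*}
u''(x)=P_\lambda(x)\,u(x)
\end{equation*}
with a continuous potential $P_\lambda$; the crucial consequence is that any two solutions of this equation have a constant Wronskian. Since $q>0$, positive solutions of $\mathcal{L}h=-\lambda h$ correspond exactly to positive solutions $u=h/q$, and because $(h'h^{-1})(0)=(u'u^{-1})(0)+(q'q^{-1})(0)$, the functions $V=H_\lambda/q$ and $v=h_\lambda/q$ are precisely the positive solutions of $u''=P_\lambda u$ whose logarithmic derivatives at $0$ are largest and smallest, respectively (corresponding to $\max\mathcal{C}_\lambda$ and $\min\mathcal{C}_\lambda$). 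The two assertions are symmetric, so I describe the first.

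To prove $\int_{-\infty}^0 V^{-2}=\infty$, suppose instead that $\int_{-\infty}^0 V^{-2}(y)\,dy<\infty$ and define the reduction-of-order partner
\begin{equation*}
W(x):=V(x)\int_{-\infty}^x \frac{dy}{V^2(y)}\;.
\end{equation*}
First I would check that $W$ is a genuine competitor: for every finite $x$ the integral $\int_{-\infty}^x V^{-2}$ equals $\int_{-\infty}^0 V^{-2}+\int_0^x V^{-2}$, which is finite (the possible divergence of $\int_0^\infty V^{-2}$ never enters, since we integrate only up to a finite endpoint), so $W$ is finite and strictly positive on all of $\mathbb{R}$. A direct differentiation gives $W''=P_\lambda W$, so $\widetilde{H}:=Wq$ is a positive $C^2$ solution of $\mathcal{L}\widetilde{H}=-\lambda\widetilde{H}$, and the Wronskian identity $VW'-V'W\equiv 1$ holds.

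The contradiction then comes from the sign of the Wronskian:
\begin{equation*}
\frac{W'}{W}-\frac{V'}{V}=\frac{VW'-V'W}{VW}=\frac{1}{VW}>0\;,
\end{equation*}
so $W$ has a strictly larger logarithmic derivative than $V$ everywhere, in particular at $0$. Normalizing $\widehat{H}:=Wq/(Wq)(0)$ yields a candidate ($\widehat{H}(0)=1$, $\widehat{H}>0$) with $\widehat{H}'(0)=(W'W^{-1})(0)+(q'q^{-1})(0)>(V'V^{-1})(0)+(q'q^{-1})(0)=H_\lambda'(0)$, so $\widehat{H}'(0)\in\mathcal{C}_\lambda$ exceeds $\max\mathcal{C}_\lambda$, which is absurd. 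For the second assertion I would run the same argument with $W(x)=v(x)\int_x^\infty v^{-2}(y)\,dy$; now the Wronskian is $vW'-v'W\equiv -1$, so $W$ has a strictly \emph{smaller} logarithmic derivative than $v$, producing a positive candidate whose derivative at $0$ falls below $\min\mathcal{C}_\lambda$ — again a contradiction.

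I expect the main obstacle to be the two bookkeeping points rather than any deep estimate: verifying that the constructed partner $W$ is finite and positive on the whole line (which relies on the observation that only the finiteness of $\int_{-\infty}^x V^{-2}$ for finite $x$ is needed, not finiteness of the full one-sided integral), and keeping track of the Wronskian sign so that the manufactured solution beats $H_\lambda$ from above in the first case and undercuts $h_\lambda$ from below in the second.
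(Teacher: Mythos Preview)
Your argument is correct and follows essentially the same route as the paper: both use reduction of order to manufacture a second positive solution of $\mathcal{L}h=-\lambda h$ whose logarithmic derivative at $0$ strictly exceeds that of $H_\lambda$, contradicting $H_\lambda'(0)=\max\mathcal{C}_\lambda$. The paper writes the general solution as $h(x;c)=H_\lambda(x)\bigl(1+c\int_0^x V^{-2}\bigr)$, notes $h'(0;c)=H_\lambda'(0)+c$, and picks a small $c>0$, whereas you build the specific partner $W(x)=V(x)\int_{-\infty}^x V^{-2}$ and read off the comparison from the Wronskian $VW'-V'W\equiv 1$; after multiplying by $q$ and normalizing, your $W$ is exactly the paper's $h(\,\cdot\,;c)$ with $c=\bigl(\int_{-\infty}^0 V^{-2}\bigr)^{-1}$.
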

\begin{proof}
We only prove that $\int_{-\infty}^{0}V^{-2}(y)\,dy=\infty.$
A general (normalized to $h(0;c)=1$) solution of $\mathcal{L}h=-\lambda h$ is expressed by
$$h(x;c):=H_\lambda(x)\left(1+c\cdot \int_{0}^{x}V^{-2}(y)\,dy\right)\;,$$ 
thus we have
$$h'(0;c)=H_\lambda'(0)+c.$$
Since $H_\lambda'(0)$ is the maximum value by definition, we have that $c\leq 0.$
Suppose that $\int_{-\infty}^{0}V^{-2}(y)\,dy<\infty$ then we can choose a small positive number $c$ such that 
$h(x;c)$ is a positive function. This is a contradiction.
\end{proof}

\noindent We now prove Proposition \ref{prop:relation}.
\begin{proof}
We only show the inequality for $x<0.$ 
First, we prove $(h_\alpha'h_\alpha^{-1})(x)<(h'h^{-1})(x)$  for $x<0.$
Let $q(x):=e^{-\int_{0}^{x}\frac{k(y)}{\sigma^{2}(y)}\,dy}.$  
Write $h=uq$ and $h_\alpha=vq.$
Then we have $h'(0)>h_\alpha'(0),$ equivalently $u'(0)>v'(0).$ 
It is because, if not, by Lemma \ref{lem:finite}, we have 
$$\int_{0}^{\infty}v^{-2}(y)\,dy<\infty\;,$$ 
and this contradicts to Lemma \ref{lem:infinite}.
Now we define $\gamma=\frac{u'}{u}-\frac{v'}{v}.$
Then
$$\gamma'=-\gamma^{2}-\frac{2v'}{v}\gamma-\frac{2(\lambda-\alpha)}{\sigma^{2}}\;.$$
Because $\gamma(0)>0,$ we have that $\gamma(x)>0$ for $x<0,$
since if $\gamma$ ever gets close to $0,$ then term $-\frac{2(\lambda-\alpha)}{\sigma^{2}}$ dominates the right hand side of the equation.
Thus for $x<0,$ it is obtained that $(v'v^{-1})(x)<(u'u^{-1})(x),$ which implies that $(h_\alpha'h_\alpha^{-1})(x)<(h'h^{-1})(x).$ 

We now prove $(h'h^{-1})(x)<(H_\alpha'H^{-1}_\alpha)(x)$  
Write $H_\lambda=Vq.$
Then we have $h'(0)<H_\alpha'(0),$ equivalently $u'(0)<V'(0).$ 
Define $\Gamma:=\frac{u'}{u}-\frac{V'}{V}.$
Then
$$\Gamma'=-\Gamma^{2}-\frac{2V'}{V}\Gamma-\frac{2(\lambda-\alpha)}{\sigma^{2}}\;.$$
We claim that $\Gamma(x)<0$ for $x<0.$
Suppose there exists $x_0<0$ such that $\Gamma(x_0)\geq0.$
Then for all $z<x_0,$ it is obtained that $\Gamma(z)>0$
since if $\Gamma$ ever gets close to $0,$ then term $-\frac{2(\lambda-\alpha)}{\sigma^{2}}$ dominates the right hand side of the equation. 
For $z<x_0,$ we have
$$-\frac{2V'(z)}{V(z)}=\frac{\Gamma'(z)}{\Gamma(z)}+\Gamma(z)+\frac{2(\beta-\lambda)}
{\sigma^{2}(z)}\cdot\frac{1}{\Gamma(z)}\;.$$
Integrating from $x_{0}$ to $y,$
$$-2\ln\frac{V(\,y\,)}{V(x_{0})}=\ln \frac{\Gamma(\,y\,)}{\Gamma(x_{0})}+\int_{x_{0}}^{y}
\Gamma(z)\,dz+\int_{x_{0}}^{y}\frac{2(\lambda-\alpha)}{\sigma^{2}(z)}\cdot\frac{1}{\Gamma(z)}
\,dz$$
which leads to
$$\frac{V^{2}(x_{0})}{V^{2}(\,y\,)}\leq \frac{\Gamma(\,y\,)}{\Gamma(x_{0})}\,e^{\int_{x_{0}}^{y}
\Gamma(z)\,dz}$$
for $y<x_{0}.$
Thus,
\begin{equation*}
\begin{aligned}
\int_{-\infty}^{x_{0}}V^{-2}(y)\,dy
&\leq (\text{constant})\cdot\int_{-\infty}^{x_{0}}\Gamma(y)\,e^{\int_{x_{0}}^{y}\Gamma(w)dw}\,dy 
\\
&=(\text{constant})\cdot\left(1-e^{-\int_{-\infty}^{x_{0}}\Gamma(w)dw}\right)\\
&\leq (\text{constant})\\
&< \infty \;.
\end{aligned}
\end{equation*}
This implies that $\int_{-\infty}^{0}V^{-2}(y)\,dy$ is finite.
This contradicts to Lemma \ref{lem:infinite}.
\end{proof}

\end{document}